\newcommand{\be}{\begin{eqnarray}}
\newcommand{\ee}{\end{eqnarray}}
\newtheorem{theo}{Theorem}
\newtheorem{prop}{Proposition}
\newcommand{\bu}{{\bf u}}
\begin{document}

\title{A Causal Formulation of\\
Dissipative Relativistic Fluid Dynamics\\ with or without Diffusion}

%


\author{\it Heinrich Freist\"uhler\thanks{Department of Mathematics, University of Konstanz, 
Konstanz, Germany}}

\date{November 1, 2022}

\maketitle

\begin{abstract}
The article proposes a causal five-field formulation of dissipative relativistic fluid dynamics
as a quasilinear symmetric hyperbolic system of second order.  The system is determined by four
dissipation coefficients $\eta,\zeta,\kappa,\mu$, free functions of the fields, which quantify
shear viscosity, bulk viscosity, heat conductivity, and diffusion. 
\end{abstract}

\newpage
\section{Introduction}
The present note is related to the question of whether dissipative relativistic 
fluid dynamics can be properly modeled by a \textit{hyperbolic, causal five-field theory} 
\be \label{nsf}
\begin{aligned}
                 \frac{\partial}{\partial x^{\beta}}\,(T^{\alpha \beta}+\Delta T^{\alpha \beta})=0,
       \quad     \quad     \frac{\partial}{\partial x^{\beta}}(N^{\beta}+\Delta N^\beta)=0,
                \end{aligned}
\ee
where the dissipation-free parts of the energy-momentum tensor\footnote{We work with 
the Minkowski metric $g^{\alpha\beta}$ of signature $(-,+,+,+)$.} and 
the particle  number density current,
\be
T^{\alpha\beta}=(\rho+p)U^{\alpha}U^{\beta}+pg^{\alpha\beta},\quad
N^\beta=nU^\beta,
\label{TN}
\ee
are given in terms of the fluid's velocity $U^\alpha$ and its 
energy density $\rho$, pressure $p$, and particle number density $n$ ---
one speaks of five fields since $U_\alpha$ is constrained by $U^\alpha U_\alpha=-1$ and
$\rho,p,n$ are related to each other by an equation of state ---, and 
the {\it dissipation tensors}\ $\Delta T^{\alpha\beta}, \Delta N^\beta$ are linear 
in the space-time gradients of these fields (``relativistic Navier-Stokes'').
That question has first\footnote{An interesting different line of thinking was later 
started in \cite{BDN}.}
been answered in the affirmative in the articles \cite{FTPR,FTIG}, to which this note 
is a supplement. In the introduction to his ``Hyperbolic Conservation Laws in Continuum Physics'' 
\cite{Da}, Dafermos suggests that 
``the umbilical cord joining continuum physics with the theory of partial differential equations 
should not be severed, as it is still carrying nourishment in both directions''. 
This paper is part of an attempt to contribute in that spirit.

Its primary purpose is to include diffusion in addition to viscosity
and heat conduction. As in \cite{FTIG,FTBF}, we find corresponding equations from the three guiding principles that
(a) when written in the natural Godunov variables that make the dissipation-free part
symmetric hyperbolic in the first-order sense, they should be symmetric hyperbolic in the sense
of (a covariant version of) the Hughes-Kato-Marsden (HKM) conditions, (b) they should be first-order
equivalent to the classical theories by Eckart and Landau, and (c) all signal speeds should be
bounded by the speed of light. A side-motivation is related to the different treatment of heat 
conduction in \cite{FTIG,FTBF}. While the paper \cite{FTIG} on general non-barotropic fluids
interprets, like Landau, heat conduction as an effect within
matter conservation \eqref{nsf}$_2$, the article \cite{FTBF} on barotropic fluids
represents it, like Eckart, within the conservation laws of energy and momentum \eqref{nsf}$_1$. 
This latter option is however also available for non-barotropic fluids, and it is the one we adopt 
in this paper. 

In Section 2, we state the dissipation tensors $\Delta T^{\alpha\beta}, \Delta N^\beta$ 
that are determined by the coefficients $\eta, \zeta,\chi,\mu$ of shear viscosity, bulk viscosity, 
heat conductivity, and diffusion. Section 3
shows that the resulting five-field theory is first-order equivalent with the corresponding 
Eckart and Landau formulations. In Section 4 we demonstrate that it belongs to the 
HKM class and is causal. 
Sec.\ 5 establishes the validity of the second law of thermodynamics to leading order.
We assume at most places (though not in Sec.\ 5) that the fluid is polytropic,
\be
\rho=mn+\frac p{\gamma-1},\quad \text{with }m>0\text{ and }1<\gamma<2. 
\label{poly}
\ee
As in \cite{FTIG}, this assumption is made for concreteness only; the essence of the argumentation does not 
depend on it, and corresponding results for arbitrary other massive non-barotropic fluids 
follow through obvious small adaptations. 

For the case that all dissipative mechanisms are active, i.\ e., \eqref{etachipos}, \eqref{mupos},  the model 
proposed and discussed in this paper has been studied by Sroczinski who showed in \cite{S20} that for initial data
that are sufficiently small perturbations of a spatially homogeneous state, the Cauchy problem has a unique solution
for all times $t>0$ that decays for $t\to\infty$ to the reference state. The same is shown in \cite{FS} for the case 
with \eqref{etachipos} and $\mu=0$, i.\ e., in the absence of diffusion. 

While the approach pursued here seems mathematically coherent and plausible, its physical validity is to be 
further examined, notably in view of the fact that the second law of thermodynamics is so far established only 
to leading order.
A different, elaborate formulation of dissipative fluid dynamics in terms of hyperbolic balance laws
is Rational Extended Thermodynamics,
which does not have the latter problem; cf.\ the recent monograph \cite{RuSu} and references therein.

\section{Proposed new formulation and main results}
\setcounter{equation}0
Besides fixed choices of coefficients of shear viscosity, bulk viscosity,  
and heat conduction,
\be\label{etachipos}
\eta>0,\quad \zeta\ge 0,\quad\text{and}\quad\chi>0,
\ee
we now consider also a diffusion coefficient 
\be\label{mupos}
\mu>0;
\ee
all four of these coefficients can be taken to be widely arbitrary functions of 
the thermodynamic variables. 
Our proposal is to use\footnote{We utilize
$\Pi^{\alpha\beta}=g^{\alpha\beta}+U^\alpha U^\beta$.} 
\begin{eqnarray}
-\Delta T^{\alpha\beta}
&=&
\eta\Pi^{\alpha\gamma}\Pi^{\beta\delta}
\left({\partial U_\gamma\over \partial x^\delta}
+
{\partial U_\delta\over \partial x^\gamma}
-{2\over 3}g_{\gamma\delta}
{\partial U^\epsilon\over \partial x^\epsilon}\right)
+\tilde\zeta\Pi^{\alpha\beta}
{\partial U^\gamma\over \partial x^\gamma}\notag
\\
\label{DeltaT}
 &&
+\sigma\left( U^\alpha U^\beta 
{\partial U^\gamma\over \partial x^\gamma}
-\left(g^{\alpha\gamma} U^\beta + g^{\beta\gamma}U^\alpha\right) 
U^\delta
{\partial U_\gamma\over \partial x^\delta}\right)
\\
&&
+\chi\left(U^\alpha\frac{\partial\theta}{\partial x^\beta} 
+
U^\beta\frac{\partial\theta}{\partial x^\alpha}
-
g^{\alpha\beta} 
U^\gamma\frac{\partial\theta}{\partial x^\gamma}
\right)\notag
\end{eqnarray}
and
\begin{eqnarray}\label{DeltaN}
 -\Delta N^\beta
&\equiv&
\mu
g^{\beta\delta}
{\partial \psi\over \partial x^\delta}
+
\tilde\sigma\left(
U^\beta 
{\partial U^\gamma\over \partial x^\gamma}
-U^\gamma
{\partial U^\beta\over \partial x^\gamma}
\right)
\end{eqnarray}
or, written as matrices with respect to the fluid's rest frame,\footnote{We
write $\bu$ for the 3-velocity with respect to the fluid's rest frame at a given point. 
(While $\bu=0$ at that point, its gradient is free.) $\dot{\phantom{x}}$ means derivative
with respect to $x^0$, $\nabla$ derivatives with respect to $(x^1,x^2,x^3)$, all in the rest frame.
}
\be\label{matrixT}
-\Delta T|_0=
\begin{pmatrix}
-\chi\dot\theta+\sigma \nabla\cdot \bu &\chi\nabla\theta-\sigma \dot\bu^\top\\
\chi\nabla^\top\theta-\sigma \dot\bu&\eta{\bf S}\bu+(
\tilde\zeta\nabla\cdot \bu -\chi\dot\theta)\bf I
\end{pmatrix}
\ee
and
\be\label{matrixN}
-\Delta N|_0=
\begin{pmatrix}
-\mu\dot\psi+
\tilde\sigma
\nabla\cdot \bu\
&\ \mu\nabla\psi-
\tilde\sigma
\dot\bu^\top
\end{pmatrix},
\ee
where 
$$
\theta,\quad \psi= h/\theta-s  
$$
denote temperature and Israel's thermal potential, with $s$ and $h=(\rho+p)/n=$   
the specific entropy and specific enthalpy ($\psi=g/\theta$ with $g=h-\theta s$ the chemical potential)..
The coefficients $\sigma,\tilde\zeta,$ and $\tilde \sigma$ are given by
\be\label{choicetildezeta}
\sigma=\frac43\eta+\tilde\zeta,
\quad
\tilde\zeta=\zeta+\tilde\zeta_1+\tilde\zeta_2+\tilde\zeta_3,
\quad
\tilde\sigma=(\sigma+\chi\theta)/h
\ee
with\footnote{This implies that 
$\sigma=((4/3)\eta+\zeta+\tilde\zeta_1+\tilde\zeta_3)/(1-(\gamma-1)(1-m/h))$.}
\be\label{choicezeta1zeta2zeta3}
\tilde\zeta_1=-(\gamma-1)\left(2-\gamma+m/h\right)\chi\theta   ,
\quad
\tilde\zeta_2=(\gamma-1)(1-m/h)\sigma, 
\quad 
\tilde\zeta_3=(\gamma-1)^2(m^2/\theta)\mu.
\ee
The following will be established in Secs.\ 3, 4, 5.
\begin{theo}
The model \eqref{nsf}--\eqref{poly}, \eqref{DeltaT}--\eqref{choicezeta1zeta2zeta3} is 
first-order equivalent with the Eckart and Landau models with 
diffusion. It is symmetric hyperbolic 
when written in Godunov-Boillat variables. It is causal when $\tilde\zeta\ge-\frac13\eta$, and sharply causal 
when $\tilde\zeta=-\frac13\eta$.\footnote{This can be expressed as a (natural) restriction on $\chi$ in dependence
on $\eta,\zeta$ and $\mu$: $\chi\le\chi_*$ or $\chi=\chi_*$, respectively,
with a function $\chi_*=\chi_*(\eta,\zeta,\mu)>0$.}   
The entropy production in this model is non-negative to leading (first) order in the dissipation coefficients $\eta,\zeta,\chi,\mu$. 
\end{theo}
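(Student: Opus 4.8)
The plan is to establish the four assertions in the order of Sections~3--5, since each relies on the structure exposed by the previous one. For the \emph{first-order equivalence}, I would start from the rest-frame matrices \eqref{matrixT}, \eqref{matrixN} and compare them, component by component, with the Eckart and Landau dissipation tensors with diffusion. The decisive device is that, to leading order in the dissipation coefficients, the dissipation-free equations \eqref{nsf}--\eqref{TN} may themselves be used as constitutive relations to trade the time derivatives $\dot\theta,\dot\bu,\dot\psi$ for spatial gradients; the error in doing so is quadratic in $\eta,\zeta,\chi,\mu$ and hence invisible at first order. Inserting these on-shell relations, I expect the proposed tensors to collapse onto the Eckart form (heat conduction carried inside $\Delta T^{\alpha\beta}$) and, after the first-order redefinition of $U^\alpha$ that realigns the frame with the energy flux, simultaneously onto the Landau form. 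The choice \eqref{choicetildezeta}--\eqref{choicezeta1zeta2zeta3} of $\sigma,\tilde\zeta,\tilde\sigma$ should be precisely what makes the parallel ($U^\alpha U^\beta$) and mixed pieces agree after this substitution, with the polytropic law \eqref{poly} entering only through the thermodynamic derivatives.

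For \emph{symmetric hyperbolicity}, the strategy is to rewrite the full system in the Godunov--Boillat variables $\bv$ dual to the convex entropy, in which the ideal part is first-order symmetric hyperbolic. Writing the whole system schematically as $A^{\alpha\beta}\partial_\alpha\partial_\beta\bv = f(\bv,\partial\bv)$, I would verify the covariant HKM conditions directly: first, that the coefficient of the leading $\partial_0^2$ structure is symmetric and positive definite, and second, that the spatial second-order symbol $B^{ij}\xi_i\xi_j$ is a symmetric, positive semidefinite quadratic form in the frequency $\bxi$. Because the dissipation tensors \eqref{DeltaT}, \eqref{DeltaN} are, by construction, field gradients contracted with the fixed coefficients $\eta,\tilde\zeta,\sigma,\chi,\mu$, the required symmetry ought to be manifest once the Godunov change of variables is applied; positivity then reduces to the sign conditions \eqref{etachipos}, \eqref{mupos} together with a single inequality on $\sigma$.

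\emph{Causality} is where I expect the real work to lie. Here one reads off the characteristic speeds from the principal symbol of the second-order operator, i.e.\ solves the dispersion relation $\det(A^{\alpha\beta}\xi_\alpha\xi_\beta)=0$ for plane waves $e^{i(\xi\cdot x-\omega x^0)}$ and asks for the supremum of $|\omega|/|\bxi|$ over all spatial directions. The difficulty is that the shear, bulk, heat, and diffusion modes couple, so the symbol is a moderately large matrix whose characteristic polynomial must be factored, or at least bounded, with the maximal speed shown to be $\le 1$ uniformly. I anticipate that after diagonalizing the transverse (shear) sector, the longitudinal sector yields a polynomial in the squared speed whose largest root is governed precisely by the combination $\sigma=\tfrac43\eta+\tilde\zeta$; the stated threshold $\tilde\zeta\ge-\tfrac13\eta$ is exactly $\sigma\ge\eta$, and sharp causality $\tilde\zeta=-\tfrac13\eta$ (that is, $\sigma=\eta$) should correspond to the fastest characteristic touching the light cone. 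Tracking every term so that the bound is genuinely sharp, and re-expressing it as the restriction $\chi\le\chi_*(\eta,\zeta,\mu)$ of the footnote, is the main obstacle.

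Finally, for the \emph{second law}, I would compute $\partial_\alpha S^\alpha$ for the entropy current $S^\alpha$ (whose ideal part is $snU^\alpha$) using the full equations of motion, and expand in $\eta,\zeta,\chi,\mu$. Since the dissipation-free current is divergence-free on solutions of the Euler equations, the leading contribution is quadratic in the gradients; the aim is to show that it organizes into a sum, schematically $\eta\,|\text{shear}|^2+\zeta\,(\nabla\!\cdot\!\bu)^2+(\chi/\theta)\,|\nabla\theta|^2+\mu\,|\nabla\psi|^2$, which is manifestly non-negative under \eqref{etachipos}, \eqref{mupos}. The honest caveat, already flagged in the introduction, is that this positivity is controlled only at first order: the quadratic terms dominate for small coefficients, but the higher-order contributions are not estimated, which is exactly why the theorem claims non-negativity only to leading order.
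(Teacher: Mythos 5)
Your overall architecture --- equivalence via first-order transformations, HKM verification in Godunov--Boillat variables, a characteristic-speed threshold equivalent to $\sigma\ge\eta$, and leading-order entropy positivity inherited from a manifestly non-negative quadratic form --- matches the paper's. But the equivalence step, as you plan it, would fail, and the failure points at a missing idea. You allow yourself two tools: on-shell gradient reexpression (trading $\dot\theta,\dot\bu,\dot\psi$ for spatial gradients via the ideal equations) and a first-order velocity redefinition. A velocity shift $U^\alpha\to U^\alpha+\Delta U^\alpha$ must respect $U_\alpha\Delta U^\alpha=0$, so in the rest frame $\Delta U^0=0$ and the shift changes only the $T^{0i}$ and $N^i$ slots (by $(\rho+p)\Delta u^i$ and $n\Delta u^i$); it cannot touch $T^{00}$, the stress trace, or $N^0$ at first order. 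Gradient reexpression can rewrite but never annihilate a gradient term: on shell $\dot\theta$ is proportional to $\nabla\cdot\bu$ (for the polytropic gas, $\dot\theta\approx-(\gamma-1)\theta\,\nabla\cdot\bu$), so the proposed $\Delta T^{00}|_0=-\chi\dot\theta+\sigma\nabla\cdot\bu$ in \eqref{matrixT} becomes $((\gamma-1)\chi\theta+\sigma)\nabla\cdot\bu$, with strictly positive coefficient, whereas Eckart's $\Delta T^{00}$ in \eqref{matrixEckart} is zero; likewise $\Delta N^0|_0=-\mu\dot\psi+\tilde\sigma\nabla\cdot\bu$ versus Eckart's $0$, and the stress traces differ by multiples of $\nabla\cdot\bu$. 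No combination of your two tools reconciles these slots. The missing third transformation class of \cite{FTIG} is the \emph{thermodynamic shift}: an $O(\eps)$, gradient-dependent redefinition of $(\rho,p,n)$ (equivalently of $\theta,\psi$) constrained by the equation of state, $\Delta\rho=m\Delta n+\Delta p/(\gamma-1)$, which moves terms into $T^{00}$, the pressure trace, and $N^0$ simultaneously in a locked ratio. Section 3 of the paper alternates two velocity shifts with two thermodynamic shifts, plus a further thermodynamic shift and reexpression for the diffusion part, and it is precisely these shifts that generate the coefficients $\tilde\zeta_1,\tilde\zeta_2,\tilde\zeta_3$ of \eqref{choicezeta1zeta2zeta3}; they are not byproducts of on-shell substitution, as your plan assumes.

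Two smaller points. For causality you anticipate factoring a large coupled symbol; in fact, in the Godunov--Boillat variables the cross blocks of the principal part vanish, $B^{\alpha\beta 4\delta}=B^{4\beta\gamma\delta}=0$ (the $\tilde\sigma$-terms of \eqref{DeltaN} carry coefficients antisymmetric in the two derivative indices and drop out of the symmetrized second-order symbol), so the diffusion mode decouples exactly with principal part $B^{4\beta4\delta}=\mu(-U^\beta U^\delta+\Pi^{\beta\delta})=\mu g^{\beta\delta}$ --- the light-cone wave operator, sharply causal for free --- while the remaining $4\times4$ block is the diffusion-free symbol already analyzed in \cite{FTIG}; the work reduces to the two definite matrices \eqref{HKMmatrices}, not a new dispersion analysis, and your identification $\tilde\zeta\ge-\frac13\eta\Leftrightarrow\sigma\ge\eta$ is correct. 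For the entropy claim, the paper does not compute $\partial_\alpha S^\alpha$ for the new model directly (which would force a choice of dissipative corrections to $S^\alpha$); it proves instead that the classical entropy-production functional changes only by $O(\eps^2)$ under each of the three transformation classes, so non-negativity is inherited from the Eckart expression. Your direct route can be made to work, but it amounts to re-proving that invariance, and it again needs the thermodynamic-shift identities to see that the terms which are not perfect squares are $O(\eps^2)$.
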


\section{First-order equivalence with Eckart and Landau}
\setcounter{equation}0
Paper \cite{FTIG} contains a formal description of mappings that transform between different versions of the equations 
of motion which are equivalent up to errors that are quadratic in the dissipation coefficients; notably, the 
descriptions of Eckart \cite{eck} and Landau \cite{LL} are first-order equivalent in this sense.
Using that formalism, first-order equivalences are composed from modifications called velocity shifts, thermodynamic 
shifts, and gradient reexpressions. Suppressing diffusion for a moment, i.e., briefly replacing \eqref{mupos} by 
$\mu=0$, we start from the usual Eckart tensor, in rest frame notation       
\be\label{matrixEckart}
\begin{pmatrix}
0&\chi\left(\nabla\theta+\theta\dot\bu^\top\right)\\
\chi\left(\nabla^\top\theta+\theta\dot\bu\right)&\eta \bf S\bu+\zeta \nabla\cdot \bu\,\bf I\\
0&0 
\end{pmatrix}.
\ee
We combine first a velocity shift (cf.\ \cite{FTIG}, p.\ 11,  first ``assignment'')
\be\notag
\begin{pmatrix}
0&\chi\left(\nabla\theta+\theta\dot\bu^\top\right)\\
\chi\left(\nabla^\top\theta+\theta\dot\bu\right)&\eta \bf S\bu+\zeta \nabla\cdot \bu\,\bf I\\
0&0 
\end{pmatrix}
\to
\begin{pmatrix}
0&\chi\nabla\theta\\
\chi\nabla^\top\theta&\eta {\bf S}\bu+\zeta \nabla\cdot \bu\,\bf I\\
0&-\left(\chi\theta/h\right) \dot\bu^\top
\end{pmatrix}.
\ee
and a thermodynamic shift (cf.\ \cite{FTIG}, p.\ 11, second ``assignment'' with (4.9) and (6.3))
\be\notag
\begin{pmatrix}
0&\chi\nabla\theta\\
\chi\nabla^\top\theta&\eta{\bf S}\bu+\zeta \nabla\cdot \bu\,\bf I\\
0&-\left(\chi\theta/h\right) \dot\bu^\top
\end{pmatrix}
\to
\begin{pmatrix}
-\chi\dot\theta&\chi\nabla\theta\\
\chi\nabla^\top\theta&\eta{\bf S}\bu+((\zeta+\tilde\zeta_1)\nabla\cdot \bu -\chi\dot\theta)\bf I\\
\left(\chi\theta/h\right)\nabla\cdot \bu&-\left(\chi\theta/h\right) \dot\bu^\top
\end{pmatrix}
\ee
with 
$$
\tilde\zeta_1=-(\gamma-1)\left(2-\gamma+\frac mh\right)\chi\theta,
$$
and then follow a second velocity shift
\be\notag
\begin{pmatrix}
-\chi\dot\theta&\chi\nabla\theta\\
\chi\nabla^\top\theta&\eta{\bf S}\bu+((\zeta+\tilde\zeta_1)\nabla\cdot \bu -\chi\dot\theta)\bf I\\
\left(\chi\theta/h\right)\nabla\cdot \bu&-\left(\chi\theta/h\right) \dot\bu^\top
\end{pmatrix}
\to
\begin{pmatrix}
-\chi\dot\theta&\chi\nabla\theta-\sigma \dot\bu^\top\\
\chi\nabla^\top\theta-\sigma \dot\bu&\eta{\bf S}\bu+((\zeta+\tilde\zeta_1)\nabla\cdot \bu -\chi\dot\theta)\bf I\\
\left(\chi\theta/h\right)\nabla\cdot \bu&-\left((\chi\theta+\sigma)/h\right) \dot\bu^\top
\end{pmatrix}
\ee
by another thermodynamic shift, which leads to  
\be\label{tensorwithoutdiffusion}
\begin{pmatrix}
-\chi\dot\theta+\sigma \nabla\cdot \bu &\chi\nabla\theta-\sigma \dot\bu^\top\\
\chi\nabla^\top\theta-\sigma \dot\bu&\eta{\bf S}\bu+((\zeta+\tilde\zeta_1+\tilde\zeta_2)\nabla\cdot \bu -\chi\dot\theta)\bf I\\
\left((\chi\theta+\sigma)/h\right)\nabla\cdot \bu&-\left((\chi\theta+\sigma)/h\right) \dot\bu^\top
\end{pmatrix}
\ee
with 
$$
\tilde\zeta_2=(\gamma-1)\left(1-\frac mh\right)\sigma.
$$
The matrix in \eqref{tensorwithoutdiffusion} is already the rest-frame form in \eqref{matrixT}, \eqref{matrixN} 
in the case $\mu=0$.

To now include also diffusion, recall that according to Kluitenberg, de Groot, Mazur \cite{KGM} it is represented in
the Eckart frame by 
$$
-\Delta N^\beta=\mu\Pi^{\beta\gamma}\frac{\partial \psi}{\partial x^\gamma};
$$
i.\ e., we have to superimpose on \eqref{matrixEckart} the matrix 
$$
\begin{pmatrix}
0&0\\
0&0\\
0&\mu\nabla\psi 
\end{pmatrix}.
$$
From this starting point, a thermodynamic shift and a gradient reexpression 
lead to 
\be\notag
\to
\begin{pmatrix}
0&0\\
0&-m(\gamma-1)\dot\psi\\
-\mu\dot\psi&\mu\nabla\psi 
\end{pmatrix}
\to
\begin{pmatrix}
0&0\\
0&\tilde\zeta_3 \nabla\cdot \bu \\
-\mu\dot\psi&\mu\nabla\psi 
\end{pmatrix}
\ee
with
$$
\tilde\zeta_3=(\gamma-1)^2\frac{m^2}\theta\mu 
$$
Adding the last version to \eqref{tensorwithoutdiffusion}, we reach   
yields 
\eqref{matrixT} - \eqref{choicezeta1zeta2zeta3}.

\section{Symmetric hyperbolicity and causality}
\setcounter{equation} 0
We start from the general equivariant forms of tensors 
$-\Delta T^{\alpha\beta}$ and $-\Delta N^\beta$ that are linear in the gradients of 
the state variables. These forms are  (cf.\ \cite{FTIG})
$$ 
-\Delta T^{\alpha\beta}
\equiv
U^\alpha U^\beta P
+(\Pi^{\alpha\gamma} U^\beta + \Pi^{\beta\gamma}U^\alpha) Q_\gamma
+\Pi^{\alpha\beta}R 
+\Pi^{\alpha\gamma}\Pi^{\beta\delta}S_{\gamma\delta}
\label{ournewTansatzg}
$$ 
with
$$
P
=
\tau U^\gamma{\partial \theta\over \partial x^\gamma}
+
\sigma{\partial U^\gamma\over \partial x^\gamma}
+\check\iota U^\gamma{\partial \psi\over \partial x^\gamma},
\quad
Q_\gamma
\equiv
\nu{\partial \theta\over \partial x^\gamma}
+\check\varsigma U^\delta
{\partial U_\gamma\over \partial x^\delta}
+\upsilon{\partial \psi\over \partial x^\gamma},
$$
\[
R
=
\omega U^\gamma{\partial \theta\over \partial x^\gamma}
+
\tilde\zeta{\partial U^\gamma\over \partial x^\gamma}
+\tilde\iota U^\gamma{\partial \psi\over \partial x^\gamma},
\quad
S_{\alpha\beta}
\equiv
\eta\left({\partial U_\alpha\over \partial x^\beta}
+
{\partial U_\beta\over \partial x^\alpha}
-{2\over 3}g_{\alpha\beta}
{\partial U^\gamma\over \partial x^\gamma}\right),
\]
and
$$ 
-\Delta N^\beta
\equiv
U^\beta \hat P
+\Pi^{\beta\delta}\hat Q_\delta
$$ 
with
$$ 
\hat P
=
\hat\tau U^\delta{\partial \theta\over \partial x^\delta}
+\hat\sigma
{\partial U^\delta\over \partial x^\delta}
+\hat\iota U^\delta{\partial \psi\over \partial x^\delta},
\quad
\hat Q_\delta
\equiv
\hat\nu{\partial \theta\over \partial x^\delta}
+\hat\varsigma U^\epsilon
{\partial U_\delta\over \partial x^\epsilon}
+\hat\upsilon{\partial \psi\over \partial x^\delta}.
$$ 
Our proposed new theory corresponds to choosing 

(i) $\nu=-\tau=-\omega=\chi$,

(ii)
$-\check\varsigma=\sigma$ and $\hat\sigma=-\hat\varsigma=\tilde\sigma$, 

(iii) $\hat\upsilon=-\hat\iota=\mu$, 

(iv) 
$\hat\tau=\hat\nu=\tilde\iota=\check\iota=\upsilon=0$,

i.\ e., 
$$
P
=
-\chi U^\gamma{\partial \theta\over \partial x^\gamma}
+
\sigma{\partial U^\gamma\over \partial x^\gamma}
\quad
Q_\gamma
\equiv
\chi{\partial \theta\over \partial x^\gamma}
-\sigma U^\delta
{\partial U_\gamma\over \partial x^\delta}
$$
\[
R
=
-\chi U^\gamma{\partial \theta\over \partial x^\gamma}
+
\tilde\zeta{\partial U^\gamma\over \partial x^\gamma}
\quad
S_{\alpha\beta}
\equiv
\eta\left({\partial U_\alpha\over \partial x^\beta}
+
{\partial U_\beta\over \partial x^\alpha}
-{2\over 3}g_{\alpha\beta}
{\partial U^\gamma\over \partial x^\gamma}\right),
\]
and
$$ 
\hat P
=
\tilde\sigma
{\partial U^\delta\over \partial x^\delta}
-\mu U^\delta{\partial \psi\over \partial x^\delta},
\quad
\hat Q_\delta
\equiv
-\tilde\sigma U^\epsilon
{\partial U_\delta\over \partial x^\epsilon}
+\mu{\partial \psi\over \partial x^\delta}.
$$
This directly yields 
\eqref{DeltaT}, \eqref{DeltaN}.

As in \cite{FTIG}, we write \eqref{nsf} in 
the {Godunov-Boillat variables} \cite{Go1,Bo,RS}
$$
\psi_\alpha=\frac{U^\alpha}\theta,\quad\psi_4=\psi.
$$ 
Correspondingly, we write the second-order parts of 
$$
-\frac{\partial}{\partial x^ \beta}\left(\Delta T^{\alpha\beta}\right)
\quad\text{and}\quad
-\frac{\partial}{\partial x^ \beta}\left(\Delta N^{\beta}\right)
$$
as 
$$
B^{\alpha\beta c\delta}{\partial^2 \psi_c\over\partial x^\beta \partial x^\delta}
\quad\text{and}\quad
B^{4\beta c\delta}{\partial^2 \psi_c\over\partial x^\beta \partial x^\delta},
$$
respectively, where the index $c$ runs from $0$ through $4$.

Expressing derivatives as
$$ 
{\partial\theta   \over\partial x^\delta}
=\theta^2 U^\gamma
{\partial\psi_\gamma   \over\partial x^\delta},
\quad\quad
{\partial U^\sigma\over\partial x^\delta}
=
\theta \Pi^{\sigma\gamma}
{\partial\psi_\gamma\over\partial x^\delta},
$$ 
we see that 
$$ \begin{aligned}
B^{\alpha\beta\gamma\delta}
=
&+U^\alpha U^\beta(-\chi\theta^2 U^\gamma U^\delta +\sigma\theta \Pi^{\gamma\delta})\\
&+\Pi^{\alpha\beta}(-\chi\theta^2 U^\gamma U^\delta +\tilde\zeta\theta \Pi^{\gamma\delta})\\
&+\chi\theta^2(\Pi^{\alpha\delta}U^\beta+\Pi^{\beta\delta}U^\alpha)U^\gamma\\
&-\sigma\theta(\Pi^{\alpha\gamma}U^\beta+\Pi^{\beta\gamma}U^\alpha)U^\delta\\
&+\eta\theta (\Pi^{\alpha\gamma}\Pi^{\beta\delta}
+\Pi^{\alpha\delta}\Pi^{\beta\gamma}
-(2/3)
\Pi^{\alpha\beta}\Pi^{\gamma\delta})
\end{aligned}
\notag
$$ 
and
$$ 
B^{\alpha\beta 4\delta}=
B^{4\beta\gamma\delta}=
0.
$$ 
as well as
$$ 
B^{4\beta 4\delta}= 
-\mu U^\beta U^\delta
+\mu \Pi^{\beta\delta}.
$$ 

With any $N^\beta$ satisfying
$
N^\beta U_\beta=0,\quad N^\beta N_\beta=1,
$
the rest-frame coefficient matrices 
$
B^{a\beta c\delta}U_\beta U_\delta
$
and
$
B^{a\beta c\delta}N_\beta N_\delta
$ 
are thus given by
\be\label{HKMmatrices}
\begin{pmatrix}
-\chi\theta^2&0&0\\
0&-\sigma\theta\delta^{ij}&0\\
0&0&-\mu
\end{pmatrix},
\quad\quad
\begin{pmatrix}
\chi\theta^2&0&0\\
0&
\eta\theta \delta^{ij}
+(\frac13\eta+\tilde\zeta)\theta N^i N^j&0\\ 
0&0&\mu
\end{pmatrix}.
\ee
This confirms the (covariant version of the) HKM definiteness conditions
(\cite{FTPR}, (4.1)-(4.5))
$$ 
B^{a\beta c\delta}H_\beta H_\delta V_a V_c<0
\ \ \hbox{for all $V_a\neq 0$,}
\label{pos1}
$$  %
and 
$$ 
B^{a\beta c\delta}N_\beta N_\delta V_a V_c>0
\ \ \hbox{for all $V_a\neq 0$,}
\label{pos2}
$$ 
for 
$$ 
\hbox{some $H_\beta$ with $H_\beta H^\beta<0$ and 
all $N_\beta\neq0$ with
$N_\beta H^\beta=0$.}
\label{sh} 
$$ 
The statements on causality and sharp causality follow as in \cite{FTIG} from \eqref{HKMmatrices}
and the facts that causality and sharp causality correspond to $\tilde\zeta\ge-\frac13\eta, \sigma\ge \eta$ and 
$\tilde\zeta=-\frac13\eta,\sigma=\eta$, respectively.  

\section{Entropy production}
\setcounter{equation}0
In the Eckart frame, the entropy production is classically known (cf.\ \cite{W}, p.\ 55) as  
\begin{align*}
\mathcal Q
\equiv
&  -{1\over\theta^2}{\partial\theta\over\partial x^0}\Delta T^{00}|_0
-{1\over\theta^2}\left({\partial\theta\over\partial x^i}+\theta{\partial 
u_i\over \partial x^0}\right)\Delta T^{i0}|_0
-{1\over\theta}{\partial u_i\over\partial x^j}\Delta T^{ij}|_0
-{\partial\psi\over\partial x^0}\Delta N^{0}|_0
-{\partial \psi\over\partial x^j}\Delta N^{j}|_0
\notag
\\
=&
{\chi\over\theta^2}
|\nabla\theta+\theta\dot\bu|^2
+ \frac{\eta}{2\theta}||\mathcal S\bu||^2+\frac \zeta\theta(\nabla\cdot\bu)^2+\mu|\nabla\psi|^2\ge 0.
\end{align*}
The claim of Theorem 1 on entropy production is an immediate consequence of the following result, 
which is not restricted to polytropic gases.
\begin{prop}
Let the dissipation coefficients be of magnitude $O(\epsilon)$.  
Under first-order equivalence transformations, the entropy production then changes by a difference 
$\Delta\mathcal Q$ of higher order $O(\epsilon^2)$.   
\end{prop}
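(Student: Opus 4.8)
The plan is to recast the entropy production into a single covariant contraction and then track how each elementary building block of a first-order equivalence transformation changes it. First I would rewrite the five-term rest-frame formula displayed before the Proposition as
\[
\mathcal Q=-\Delta T^{\alpha\beta}\,\partial_\beta\!\left(\frac{U_\alpha}{\theta}\right)-\Delta N^\beta\,\partial_\beta\psi,
\]
i.e. the natural pairing of the dissipative fluxes with the gradients of the Godunov--Boillat variables $U_\alpha/\theta$ and $\psi$. A short check in the rest frame (using $U^0=1$, $u_i=0$ at the point, and the symmetry of $\Delta T$) reproduces exactly the five terms in the excerpt, so this covariant form is legitimate; it is the form in which the invariance becomes transparent.

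Next I would write the difference produced by one elementary move as
\[
\delta\mathcal Q=-\delta(\Delta T^{\alpha\beta})\,\partial_\beta\!\left(\tfrac{U_\alpha}{\theta}\right)-\delta(\Delta N^\beta)\,\partial_\beta\psi-\Delta T^{\alpha\beta}\,\partial_\beta\delta\!\left(\tfrac{U_\alpha}{\theta}\right)-\Delta N^\beta\,\partial_\beta\delta\psi.
\]
Since each transformation moves only $O(\epsilon)$ pieces, the field shifts are $O(\epsilon)$ while $\Delta T,\Delta N$ are themselves $O(\epsilon)$; hence the last two terms are immediately $O(\epsilon^2)$ and may be discarded. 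The entire content of the Proposition is therefore that the remaining piece $-\delta(\Delta T^{\alpha\beta})\partial_\beta(U_\alpha/\theta)-\delta(\Delta N^\beta)\partial_\beta\psi$, which is $O(\epsilon)$ at face value, in fact vanishes to that order once the leading-order (ideal) balance laws are used. I would establish this for each of the three moves of Section~3.

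For a gradient reexpression the claim is essentially free: such a move replaces a first-order gradient by an expression agreeing with it whenever the ideal balance laws hold, so $\delta(\Delta T),\delta(\Delta N)$ equal a dissipation coefficient times a combination of ideal residuals, which are $O(\epsilon)$ on the full shell, giving $O(\epsilon^2)$ outright. The substantive cases are the velocity and thermodynamic shifts, where $\delta(\Delta T^{\alpha\beta})$ and $\delta(\Delta N^\beta)$ have precisely the tensorial form of a perturbation of the ideal quantities $(\rho+p)U^\alpha U^\beta+pg^{\alpha\beta}$ and $nU^\beta$. For a velocity shift $U^\alpha\to U^\alpha+w^\alpha$ with $w^\alpha=O(\epsilon)$ and $U_\alpha w^\alpha=0$, one has $\delta(\Delta T^{\alpha\beta})=-(\rho+p)(U^\alpha w^\beta+w^\alpha U^\beta)$ and $\delta(\Delta N^\beta)=-nw^\beta$ up to $O(\epsilon^2)$; substituting, using $U^\alpha\partial_\beta U_\alpha=0$ and the ideal momentum law $(\rho+p)U^\beta\partial_\beta U_\alpha=-\Pi_\alpha{}^\beta\partial_\beta p$, the contribution collapses to
\[
w^\beta\left[\frac{\rho+p}{\theta^2}\,\partial_\beta\theta-\frac1\theta\,\partial_\beta p+n\,\partial_\beta\psi\right],
\]
and the bracket vanishes identically by the Gibbs--Duhem relation $n\,\partial_\beta\psi=\theta^{-1}\partial_\beta p-(\rho+p)\theta^{-2}\partial_\beta\theta$, obtained from $\psi=g/\theta$ and $dg=-s\,d\theta+n^{-1}dp$. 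The thermodynamic shift is treated identically: its $\delta(\Delta T),\delta(\Delta N)$ are perturbations of the ideal tensors under shifts of $(\rho,p,n)$, and contracting them with the same forces reproduces the differential of this Gibbs--Duhem identity, hence vanishes on the leading-order shell.

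Finally, since any first-order equivalence is a finite composition of these elementary moves and each changes $\mathcal Q$ only by $O(\epsilon^2)$, the total change is $O(\epsilon^2)$, which is the assertion. The unifying fact behind all three computations --- which I would isolate as a lemma --- is that the forces $\partial_\beta(U_\alpha/\theta)$ and $\partial_\beta\psi$ annihilate ideal-form tensors modulo the ideal equations of motion, the very mechanism by which the ideal fluid produces no entropy. The main obstacle I anticipate is bookkeeping rather than conceptual: writing each Section~3 move precisely enough to read off $\delta(\Delta T^{\alpha\beta})$ and $\delta(\Delta N^\beta)$ in covariant form (the paper records them only as rest-frame matrices), and keeping careful track of which $O(\epsilon)$ terms are legitimately dropped when the leading-order balance laws --- valid only up to $O(\epsilon)$ on the full shell --- are invoked, so that the discarded remainders are genuinely $O(\epsilon^2)$ and not merely formally small.
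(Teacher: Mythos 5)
Your covariant rewriting of $\mathcal Q$ as $-\Delta T^{\alpha\beta}\partial_\beta(U_\alpha/\theta)-\Delta N^\beta\partial_\beta\psi$ is legitimate and matches the paper's rest-frame formula; your splitting of $\delta\mathcal Q$ (discarding the force-variation terms as $O(\epsilon^2)$), your treatment of gradient reexpressions, and your velocity-shift computation are all correct and are in substance the paper's own argument: the bracket you cancel via Gibbs--Duhem plus the ideal momentum equation is exactly the identity $\frac1{\theta^2}(\nabla\theta+\theta\dot\bu)+\frac1h\nabla\psi=O(\epsilon)$ (eq.\ (4.10) of \cite{FTIG}) that the paper invokes.

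The gap is the thermodynamic shift, which you dispose of with ``treated identically.'' It is not identical, and it is the case where the paper's proof does essentially all of its work. Contracting that shift with the forces gives $\frac1{\theta^2}\dot\theta\,\Delta\rho+\frac1\theta\nabla\cdot\bu\,\Delta p+\dot\psi\,\Delta n$ --- time derivatives now, not the spatial-gradient bracket of the velocity case --- and showing this is $O(\epsilon^2)$ needs three ingredients you never state: (i) the constraint that $(\Delta\rho,\Delta p,\Delta n)$ is compatible with the equation of state, i.e.\ induced by shifts $(\Delta\theta,\Delta\psi)$ through $p=p(\theta,\psi)$, $\rho=\theta p_\theta-p$, $n=p_\psi/\theta$ (for an unconstrained triple the claim is false: take $\Delta\rho=\epsilon$, $\Delta p=\Delta n=0$); (ii) the Maxwell-type symmetry $n_\theta=\rho_\psi/\theta^2$ coming from that potential structure, which converts $\frac1{\theta^2}\dot\theta\,\Delta\rho+\dot\psi\,\Delta n$ into $\frac1{\theta^2}\Delta\theta\,\dot\rho+\Delta\psi\,\dot n$; and (iii) the ideal \emph{energy and particle-number} equations $\dot\rho=-(\rho+p)\nabla\cdot\bu+O(\epsilon)$, $\dot n=-n\nabla\cdot\bu+O(\epsilon)$ --- not the momentum equation used for velocity shifts --- after which the compatibility relation $\Delta p=\frac{\rho+p}\theta\Delta\theta+n\theta\,\Delta\psi$ cancels the remaining $\nabla\cdot\bu$ bracket. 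This chain is precisely the paper's computation. Your closing ``unified lemma'' (the forces annihilate ideal-form variations modulo the ideal equations) is true and would indeed subsume both shift types at once, but proving it requires the Godunov/main-field identity $dS^\beta=-(U_\alpha/\theta)\,dT^{\alpha\beta}-\psi\,dN^\beta$ of Ruggeri--Strumia, which you neither state nor establish; as written, the thermodynamic-shift case rests on an unproven assertion.
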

\begin{proof}
This is obvious for gradient reexpressions. For any velocity shift  
$$
\begin{pmatrix}
*&*\\
*&*\\
*&*
\end{pmatrix}
\to
\begin{pmatrix}
*&*+\Delta\bu^\top\\
*+\Delta\bu,&*\\
*,&*+(1/h)\Delta \bu^\top
\end{pmatrix}
$$
with $\Delta\bu=O(\epsilon)$, 
we find 
$$
\Delta \mathcal Q
=\frac1{\theta^2}(\nabla\theta+\theta\dot\bu)\cdot\Delta \bu+\frac1h\nabla\psi\cdot\Delta\bu=O(\epsilon^2),
$$
where we have used (cf.\ \cite{FTIG}, eq.\ (4.10)) that  
$$
\frac1{\theta^2}(\nabla\theta+\theta\dot\bu)+\frac1h\nabla\psi=O(\epsilon). 
$$
Finally, consider any thermodynamic shift 
$$
\begin{pmatrix}
*&*\\
*&*\\
*&*
\end{pmatrix}
\to
\begin{pmatrix}
*+\Delta\rho&*\\
*&*+\Delta p\,\bf I\\
*+\Delta n&*
\end{pmatrix}
$$
with a triple $(\Delta \rho, \Delta n,\Delta p)=O(\epsilon)$ that is compatible with the 
equation of state. We assume the latter to be given as   
$$
p=p(\theta,\psi),\quad\text{so that \ }\rho=\theta p_\theta-p,\ \ n=p_\psi/\theta,
$$
introduce $\Delta\rho,\Delta n$ and rewrite $\dot\theta,\dot\psi$ as
$$
\begin{pmatrix}
\Delta\rho \\ \Delta n
\end{pmatrix}
=
A
\begin{pmatrix}
\Delta\theta \\ \Delta \psi
\end{pmatrix},
\quad
\begin{pmatrix}
\dot\rho \\ \dot n
\end{pmatrix}
=
A
\begin{pmatrix}
\dot\theta \\ \dot\psi
\end{pmatrix},
$$ 
with 
$$
A=
\begin{pmatrix}
\rho_\theta&\rho_\psi\\
n_\theta&n_\psi 
\end{pmatrix}
=
\begin{pmatrix}
\theta p_{\theta\theta}&\theta p_{\theta\psi}-p_\psi\\
\theta^{-2}(\theta p_{\theta\psi}-p_\psi)&\theta^{-1}p_{\psi\psi} 
\end{pmatrix},
$$ 
and evaluate the change in entropy production as
\begin{align*}
\Delta\mathcal Q&=\frac1{\theta^2}\dot\theta\Delta\rho+\frac1\theta \nabla\cdot\bu\Delta p +\dot\psi\Delta n\\
&=\nabla\cdot\bu
\bigg(- \frac{\rho+p}{\theta^2}\Delta\theta+\frac1\theta\Delta p-n\Delta\psi\bigg)+O(\epsilon^2)\\
&=O(\epsilon^2),
\end{align*}
where we have used equations (6.4) of \cite{FTIG} and the compatibility relation 
$$
\Delta p=p_\theta\Delta\theta+p_\psi\Delta \psi=\frac{\rho+p}\theta\Delta\theta+n\theta \Delta\psi. 
$$
\end{proof}


\end{document}